\documentclass[lettersize,journal]{IEEEtran}
\usepackage[T1]{fontenc}
\usepackage[utf8x]{inputenc}
\usepackage{mathrsfs}
\usepackage{bm}
\usepackage{amsmath}
\usepackage{amsthm}
\usepackage{amssymb}
\usepackage{graphicx}
 \PassOptionsToPackage{normalem}{ulem}
\usepackage{hyperref}
\usepackage{bbm}
\usepackage{lipsum}
\usepackage{cleveref}
\usepackage[ruled,vlined]{algorithm2e}
\makeatletter



  
\theoremstyle{PDAin}
\newtheorem{theorem}{\protect\theoremname}
  \theoremstyle{PDAin}
  
  \theoremstyle{PDAin}
  
  \theoremstyle{PDAin}
  
  \theoremstyle{remark}

\theoremstyle{assumption}

\theoremstyle{algorithm}

\usepackage{amsmath}
\usepackage{amssymb}
\usepackage{graphicx,psfrag,cite,subfig}
\usepackage[table]{xcolor}

\interdisplaylinepenalty=2500


\usepackage[acronym]{glossaries}

\setkeys{Gin}{width=1.0\columnwidth}

\makeatother

\usepackage{babel}
  \providecommand{\definitionname}{Definition}
  \providecommand{\lemmaname}{Lemma}
  \providecommand{\propositionname}{Proposition}
  \providecommand{\remarkname}{Remark}
\providecommand{\theoremname}{Theorem}
\providecommand{\conjecturename}{Conjecture}
\providecommand{\assumptionname}{Assumption}

\pagenumbering{gobble}

\begin{document}
 \title{Compressive Sensing Based Adaptive Defence Against Adversarial Images
 
 \thanks{The authors are with the Electrical Engineering department, Indian Institute of Technology, Delhi. Email id: akgpt7@gmail.com, \{arpanc, ee3180534\}@ee.iitd.ac.in} 
 \thanks{{\bf AKG} has contributed towards problem formulation, solution, coding and writing this paper. {\bf AC} has contributed  towards problem formulation, solution and writing. {\bf DKY} has contributed towards coding.}
 \thanks{This work was supported by the faculty seed grant, professional development allowance and professional development fund of Arpan Chattopadhyay, and the MHRD fellowship for Akash Kumar Gupta.}
 \thanks{Codes for our numerical experiments are available in~\cite{CAD}}}
 
\author{
Akash Kumar Gupta, Arpan~Chattopadhyay, Darpan Kumar Yadav
}

\maketitle


\begin{abstract}
Herein, security of deep neural network against adversarial attack is considered. Existing compressive sensing based defence schemes assume  that adversarial perturbations are usually on high frequency components, whereas  recently it has been shown that low frequency perturbations are more effective. This paper proposes a novel Compressive sensing based Adaptive Defence  (CAD) algorithm which combats  distortion in frequency domain instead of time domain. Unlike existing literature, the proposed CAD algorithm does not use information about the type of attack such as $\bm{\ell_0}$, $\bm{\ell_2}$, $\bm{\ell_\infty}$ etc. CAD algorithm uses exponential weight algorithm for exploration and exploitation to identify the type of attack,    compressive sampling matching pursuit (CoSaMP) to recover the coefficients in spectral domain, and modified basis pursuit using a  novel constraint for $\bm{\ell_0}$, $\bm{\ell_\infty}$ norm attack. Tight performance bounds for various  recovery schemes meant for various attack types are also provided. Experimental results against five state-of-the-art white box attacks on MNIST and CIFAR-10 show that the proposed CAD algorithm  achieves excellent classification accuracy and generates good quality reconstructed image with much lower computation.   
\end{abstract}

\begin{IEEEkeywords}
Compressive sensing, Image classification, Adversarial Image, CoSaMP, EXP3
\end{IEEEkeywords}

\section{Introduction}
\IEEEPARstart{T}{he} rapid development  of Deep Neural Network (DNN) and Convolutional Neural Network (CNN)   has resulted in the widening of computer vision applications such as  object recognition,  Covid-19 diagnosis  using medical images~\cite{liang2021fast},  autonomous vehicles~\cite{geiger2012we}, face detection in security and surveillance systems~\cite{jose2019face} etc. In all these applications,  images play a vital role. Recent studies have  shown  that smartly crafted, human imperceptible, small distortion in pixel values can easily fool these CNNs and DNNs~\cite{xue2021naturalae, kurakin2016adversarial, goodfellow2014explaining}. Such adversarial  images result in incorrect classification or  detection of an object or a face, leading to accidents on roads or by drones, traffic jam,  missed  identification of a criminal, etc. While  many countermeasures have been proposed in recent years to tackle adversarial images,  they are mostly based on heuristics and do not perform well against all classes of attacks. In this connection, the recent developments on  compressive sensing~\cite{candes2006stable, foucart2013invitation} allows signal recovery at sub-Nyquist rate, which is suitable for application to  images, videos and audio signals that are sparse in Fourier and wavelet domain. This also allows us to achieve lower complexity, lower power, smaller memory  and less number of sensors, and provide theoretical performance guarantee for the image processing algorithms. These reasons motivate us to use compressive sensing to combat adversarial images.

In this paper, we propose a compressive sensing based adaptive defence (CAD) algorithm  that can defend against all $l_0$, $l_2$, $l_\infty$ adversaries as well as gradient attacks. In order to identify the attack type and choose the appropriate recovery method, we use the popular exponential weight algorithm~\cite{auer1995gambling} adapted from the multi-armed bandit literature for exploration and exploitation decision, along with compressive sensing based recovery algorithms such as compressive sampling matching pursuit (CoSaMP)~\cite{needell2009cosamp}, standard basis pursuit and modified basis pursuit with novel constraints to mitigate $l_0$, $l_\infty$ attack. Numerical results reveal  that CAD is efficient in classifying both grayscale and colored images, and that it does not suffer  from clean data accuracy and gradient masking.

\subsection{Related work}
Existing research on adversarial images is broadly focused on two categories: attack design and defence algorithm design.  

{\em Attack design:} Numerous adversarial attacks have been proposed in the literature so far. They can be categorized as white box attacks  and black box attacks. In white box attacks,   the attacker has full knowledge of trained classifier its architectures, parameters and weights. Examples of white box attack include fast gradient sign method (FGSM~\cite{goodfellow2014explaining}), projected gradient descent (PGD~\cite{madry2017towards}), Carlini Wagner L2  (CW-L2) attack~\cite{carlini2017towards}, basic iterative method (BIM~\cite{kurakin2016adversarial}), Jacobian saliency map attack (JSMA~\cite{papernot2016limitations}) etc. In black box attack, the attacker generates the adversarial perturbation without having any knowledge of the target model. Transfer-based attacks~\cite{liu2016delving}, gradient estimation attacks~\cite{chen2017zoo} and boundary attack~\cite{brendel2017decision}) are some examples of black box attack. 

The adversarial attacks can also be divided into targeted attacks and non-targeted attacks. In targeted attacks,  an attacker seeks to classify an image to a target class which is different from the original class. On the other hand, in non-targeted attack, the attacker's  goal is just to misclassify an image. Based on the nature of perturbation error, attacks are further grouped into various norm attacks, such as CW ($L_2$) attack, $L_{\infty}$ BIM attack, etc.

{\em Defence design:} Adversarial image problem can be tackled either by (i) increasing the robustness of the classifier by using either image processing techniques, or adversarial training,  or compressive sensing techniques (see \cite{bafna2018thwarting,dhaliwal2020compressive, yu2018interpreting,ji2019multi,aprilpyone2020encryption}), or by  (ii) distinguishing between clean and malicious images~\cite{yadav2020efficient},~\cite{feinman2017detecting}. 

Existing defense schemes based on compressive sensing~\cite{bafna2018thwarting},~\cite{dhaliwal2020compressive} assumes that normally images have heavy spectral strength at lower frequencies and little strength at higher frequencies, which allows the adversary to modify the high-frequency spectral components to fool the human eye. Usually, most of the adversarial attacks~\cite{carlini2017towards},~\cite{szegedy2013intriguing},~\cite{brendel2017decision} work by searching the whole available attack space and are used to converge to high frequency perturbations to fool the classifier. However, it has recently been observed that constraining attack to low-frequency perturbations and keeping small distortion bound in $l_\infty$ norm is more effective, and achieves high efficiency and transferability~\cite{sharma2019effectiveness},~\cite{guo2018low}.

The authors of~\cite{bafna2018thwarting} proposed a technique based on compressive sensing to combat $l_0$ attack; the technique recovers  low frequency components corresponding to 2D discrete cosine transform (DCT) basis. In this paper, the adversarial image vector  $\bm{y}=\bm{x}+\bm{e}$, where  the original image  $\bm{x}$ is $k$-sparse in Fourier domain and the injected noise $\bm{e}$ is $t$-sparse in time domain. 
This defense is based on the fact that usually the perturbation crafted by an  attacker is on high frequency components, and hence it is not perceptible to human eye. Hence, the proposed defense works by just recovering the few top most  DCT low frequency coefficients and reconstructing images using those  coefficients only. Authors of~\cite{dhaliwal2020compressive} extended the same framework and proposed compressive recovery defense (CRD) to counter $l_2, l_{\infty}$ attack. They proposed various algorithms for different perturbation attacks which require prior knowledge of the type of attack. However, they did not   prescribe any choice of the recovery algorithm since the type  of perturbation is not known apriori.

Another popular technique to counter malicious attacks is adversarial training based defense. Here the goal is to increase the robustness of the model by training the classifier using several adversarial examples. The authors of~\cite{madry2017towards} used projected gradient adversaries and clean images to train the network; though their proposed defense works well for datasets having grayscale images such as MNIST, it   suffers from low classification accuracy for datasets having colored images such as CIFAR-10. The authors of~\cite{yu2018interpreting} used the same method and considered the properties of loss surface under various adversarial attacks in parameter and input domain. They showed that model robustness can be increased by using decision surface geometry as a parameter. The proposed defense has a very high computational complexity. The authors of~\cite{wang2020defending} proposed collaborative multi-task training (CMT) to counter various attacks. They encoded training labels into label pairs which allowed them to detect adversarial images by determining the pairwise connections between actual output and auxiliary output. However, an enormous volume of non-targeted malicious samples is needed for determining the encoding format in ~\cite{wang2020defending}. Also, the proposed defense is only applicable for non-targeted attacks. 

Several classical image processing techniques have been used earlier to combat adversarial attacks. The authors of~\cite{ji2019multi} used  Gaussian kernels with various intensities to form multiple representations of the images in the dataset, and then fed these images to the classifier. Classification and  attack detection were achieved by taking an average of multiple confidence values given by the classifier. The authors of~\cite{aprilpyone2020encryption} used pre-processing techniques; they altered the pixel values of images in the training and testing dataset block-wise by maintaining some common key. Using these image pre-processing techniques as a defense requires a lot of computations for each image in the dataset. Also, these papers did not establish  any performance bound. 

All the above papers deal with classification based defense. Detection based defense has been proposed in~\cite{yadav2020efficient}, where the authors  have proposed the  adaptive perturbation based algorithm (APERT, a pre-processing algorithm) using principal component analysis (PCA), two-timescale stochastic approximation and sequential probability ratio test (SPRT~\cite{poor2013introduction}) to distinguish between clean and adversarial images.

\subsection{Our Contributions}
We have made following contributions in this paper:
\begin{itemize}
    \item We propose a novel compressive sensing based adaptive defence (CAD) algorithm to combat $l_0$, $l_2$, $l_\infty$ norm attacks as well as gradient based attacks, with much lower computational complexity compared to existing works. The computational complexity is $\mathcal{O}(N^2)$ where $N$ is the number of pixels in an image. 
    \item CAD is the first algorithm that can detect the type of attack if it falls within certain categories (such as $l_0, l_2, l_{\infty}$), and choose an appropriate classification algorithm to apply on the potentially adversarial image. To this end, we have adapted the popular exponential weight algorithms~\cite{auer1995gambling},~\cite{chafaa2020exploiting} from multi-armed bandit literature to our setting, which adaptively assigns a score to each attack type, thus guiding us in choosing the  appropriate recovery algorithm (e.g., CoSaMP, basis pursuit etc.). The CAD  algorithm does not require any prior knowledge of the adversary.
     \item We consider adversarial perturbation in the frequency domain instead of the time domain while formulating the problem, which allows us to counter both low as well as high frequency spectral components.
    \item We propose modified basis pursuit using a {\em novel} constraint to mitigate $l_0$ and $l_\infty$ norm attacks, and establish its performance bound.
    \item Our work has the potential to trigger a new line of research where compressive sensing and multi-armed bandits can be used for detection and classification of adversarial videos.
\end{itemize}
\subsection{Organization}
This paper is further arranged as follows.  Description of various recovery algorithms and their  performance bounds are  established in Section~\ref{section:proposed method}. The proposed CAD algorithm is described in Section~\ref{section:Algorithm}. Complexity analysis of CAD  is provided in Section~\ref{section:Complexity Analysis}, followed by the numerical results  in Section~\ref{section: Experiments} and conclusions  in Section~\ref{section:Conclusion}.

\section{Basic model and various Recovery algorithms}\label{section:proposed method}
In this section, we define the basic problem and propose various recovery algorithms assuming that the attack type is known to the classifier. It is noteworthy that here we propose modified versions of basis pursuit to combat $l_0$ and $l_{\infty}$ attacks in the spectral domain, and provide performance bounds for these algorithms. The background theory provided in this section are prerequisites to understand the performance of the proposed CAD algorithm later under various circumstances.

\subsection{Problem Formulation}\label{subsection:problem formulation}
Let us consider a clean, vectorized image $\bm{x}\in \mathbb{R}^{N \times 1}$, and let us assume that it is $k$-sparse \cite[Definition~$2.1$]{foucart2013invitation} in discrete Fourier transform domain. Let its Fourier coefficients be ${\hat{\bm{x}}}=\bm{Fx}$, where $\bm{F}\in \mathbb{C}^{N \times N}$ is the DFT matrix. The adversary modifies the image in spectral domain by adding an error vector $\bm{e}$ to $\hat{\bm{x}}$, and the  distorted image becomes $\bm{y}=\bm{F}^{-1}{(\hat{\bm{x}}}+\bm{e})$. For $l_0$ attack,   $\bm{e}$ is assumed to be $\tau$-sparse, so that   ${\hat{\bm{x}}}+\bm{e}$ becomes at most $(k+\tau)$ sparse in Fourier domain. 
Defining $\bm{A} \doteq \bm{F^{-1}}$ and $\bm{\beta} \doteq \bm{F^{-1}e}$, the modified image becomes $\bm{y}=\bm{A}{\hat{\bm{x}}}+\bm{\beta}$. Our objective is to find $\hat{\bm{x}}$ from $\bm{y}$. We will solve this problem iteratively by  using compressive sensing based adaptive defense (CAD) algorithm comprising compressive sampling matching pursuit (CoSaMP) and  modified version of basis pursuit for various attacks, and an adapted version of the exponential weight algorithm for selecting the recovery algorithm.

\subsection{Compressive Sampling Matching Pursuit, CoSaMP}
We know that images are compressible signals as their coefficients decay  rapidly in Fourier domain when arranged according to their magnitudes. CoSaMP~\cite{needell2009cosamp} iteratively recovers the approximate Fourier coefficients of a compressible signal from noisy samples given that the signal is sparse in the Fourier domain; it is based on orthogonal matching pursuit (OMP), but provides  stronger guarantee than OMP.  The authors of \cite{needell2009cosamp} have shown that this algorithm produces a $2k$-sparse recovered vector whose recovery  error in $L_2$ norm is comparable with the scaled approximation error in $L_1$ norm.  CoSaMP provides optimal error guarantee for sparse signal, compressible signal and arbitrary signal.

Since we do not know apriori whether the attack is $l_0, l_2, l_\infty$ or gradient-based, and since it is difficult to infer the type of the attack initially,   we use CoSaMP along with various versions of basis pursuit for Fourier coefficient recovery. This is further motivated by the fact that CoSaMP is robust against arbitrary injected error~\cite{needell2009cosamp}. However, our proposed CAD algorithm (described in Section~\ref{section:Algorithm})  also adaptively assigns a score to each recovery scheme   via the exponential weight algorithm using the residue-based feedback for each algorithm, and probabilistically selects an algorithm in each iteration based on the assigned scores. The exponential weight algorithm is typically used to solve online learning problems that involve exploration and exploitation, and the robustness of CoSaMP facilitates exploration especially at the initial phase when the algorithm has not developed a strong belief about the type of attack.  In this connection, it is worth mentioning that CoSaMP has provably strong performance bounds in all cases and also works well for highly sparse signals. 

Let us denote by ${\hat{\bm{x}}}^0$  the initialisation before applying CoSaMP algorithm (usually we take ${\hat{\bm{x}}}^0=0$). The quantity ${\hat{\bm{x}}}_{h(k)}$ is a $k$-sparse vector (i.e., its $l_0$ norm is at most $k$) that consists of $k$ largest entries (in terms of absolute values) of ${\hat{\bm{x}}}$. We also define  ${\hat{\bm{x}}}_{t(k)}={\hat{\bm{x}}}-{\hat{\bm{x}}}_{h(k)}$. The iteration number in the CoSaMP algorithm is denoted by $n$.

The performance guarantee of CoSaMP is provided through the following theorem:
\begin{theorem} \label{theorem:Cosamp-bound}
Suppose that the $4k^{th}$ restricted isometry constant of the matrix $\bm{A}\in \mathbb{C}^{N \times N}$ satisfies $\delta_{4k}<0.47$. Then, for ${\hat{\bm{x}}}\in \mathbb{C}^N$, $\bm{\beta} \in \mathbb{C}^N$, and $S \subset[N]$ with card(S) = $k$, the Fourier coefficients ${\hat{\bm{x}}}^n$ defined by CoSaMP with $\bm{y}= \bm{A}{\hat{\bm{x}}} +\bm{\beta}$ satisfies:
\begin{equation}\label{equation:l2 norm cosamp}
\|{\hat{\bm{x}}}^n - {\hat{\bm{x}}}_{h(k)}\|_2 \leq \rho^n\|{\hat{\bm{x}}}^0-{\hat{\bm{x}}}_{h(k)}\|_2 + \tau \|\bm{A}{\hat{\bm{x}}}_{t(k)} + \bm{\beta}\|_2
\end{equation}
where the constant $0 < \rho <1$ and $\tau > 0$ depend only on $\delta_{4k}$.
\end{theorem}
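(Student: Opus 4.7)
The statement is the standard CoSaMP recovery guarantee of Needell--Tropp, specialized to the setting where the measurement matrix is the inverse DFT $\bm{A}=\bm{F}^{-1}$ and the ``signal'' to be recovered is the Fourier coefficient vector $\hat{\bm{x}}$. My plan is to reduce it to a one-step contraction inequality and then iterate.

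First, I would decompose the target as $\hat{\bm{x}} = \hat{\bm{x}}_{h(k)} + \hat{\bm{x}}_{t(k)}$ and absorb the ``tail + noise'' into an effective perturbation $\bm{\eta} \doteq \bm{A}\hat{\bm{x}}_{t(k)} + \bm{\beta}$, so that the measurement model reads $\bm{y} = \bm{A}\hat{\bm{x}}_{h(k)} + \bm{\eta}$ with a $k$-sparse target $\hat{\bm{x}}_{h(k)}$. This is the sparse-signal form needed to invoke the RIP machinery. Note that although $\bm{A}$ is complex-valued here, the CoSaMP analysis carries over verbatim to $\mathbb{C}^N$ once the RIP constant $\delta_{4k}$ is interpreted with respect to the complex inner product.

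Second, I would establish the per-iteration contraction. Write $\bm{r}^n \doteq \hat{\bm{x}}_{h(k)} - \hat{\bm{x}}^n$, the error at iteration $n$. The CoSaMP update has four substeps: (i) identify the $2k$ largest entries of the proxy $\bm{A}^\ast(\bm{y}-\bm{A}\hat{\bm{x}}^n)$; (ii) merge with the current support to obtain a set $T$ of size at most $3k$; (iii) solve a least-squares problem restricted to $T$; (iv) prune to the best $k$-term approximation. For each of these, there is a standard RIP-based inequality:
\begin{itemize}
\item Support identification: $\|\bm{r}^n_{T^c}\|_2$ can be bounded using the RIP inequality $\|((\bm{I}-\bm{A}_T^\ast \bm{A}_T)\bm{r}^n)_S\|_2 \leq \delta_{4k}\|\bm{r}^n\|_2$ on supports of size at most $4k$, plus $\|\bm{A}_T^\ast \bm{\eta}\|_2 \leq \sqrt{1+\delta_{3k}}\|\bm{\eta}\|_2$.
\item Least-squares on $T$: RIP gives $\|\hat{\bm{x}}_{h(k)}|_T - (\bm{A}_T^\dagger \bm{y})|_T\|_2 \leq \frac{1}{\sqrt{1-\delta_{3k}}}\|\bm{\eta}\|_2 + \text{(support-miss term)}$.
\item Pruning: the best $k$-term approximation of the LS solution satisfies $\|\hat{\bm{x}}^{n+1}-\hat{\bm{x}}_{h(k)}\|_2 \leq 2\|(\bm{A}_T^\dagger \bm{y}) - \hat{\bm{x}}_{h(k)}\|_2$.
\end{itemize}
Combining these three gives an inequality of the form $\|\bm{r}^{n+1}\|_2 \leq \rho\|\bm{r}^n\|_2 + \tau\|\bm{\eta}\|_2$, where $\rho$ and $\tau$ are explicit rational functions of $\delta_{4k}$.

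Third, I would verify the constants. The threshold $\delta_{4k}<0.47$ is precisely what is needed to push the composed constant $\rho$ below $1$; this is a bookkeeping check that follows the Needell--Tropp optimization but I would not reproduce the arithmetic. Once $\rho<1$ is secured, unrolling the one-step recursion over $n$ iterations produces a geometric series in $\rho$ whose sum is $\frac{1-\rho^n}{1-\rho} \leq \frac{1}{1-\rho}$, yielding
\begin{equation*}
\|\hat{\bm{x}}^n - \hat{\bm{x}}_{h(k)}\|_2 \leq \rho^n \|\hat{\bm{x}}^0 - \hat{\bm{x}}_{h(k)}\|_2 + \tau \|\bm{A}\hat{\bm{x}}_{t(k)} + \bm{\beta}\|_2
\end{equation*}
after redefining $\tau$ to absorb the $\frac{1}{1-\rho}$ factor, which matches \eqref{equation:l2 norm cosamp}.

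The main obstacle is the middle step: carefully tracking how the RIP constants propagate through the three CoSaMP substeps without losing the $\delta_{4k}<0.47$ threshold. In particular, the support-identification bound requires applying RIP on unions of supports of size up to $4k$, and the coupling between the identification and least-squares errors has to be kept tight, otherwise one gets a worse RIP threshold. The complex-valued setting adds no genuine new difficulty since $\bm{A}^\ast \bm{A}$ remains Hermitian positive semidefinite, but one should be mindful to use Hermitian (not merely transpose) inner products throughout.
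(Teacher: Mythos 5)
Your proposal is correct and follows essentially the same route as the paper, which simply invokes the standard RIP-based CoSaMP analysis (Foucart--Rauhut, Theorem 6.27): absorb the tail $\bm{A}\hat{\bm{x}}_{t(k)}+\bm{\beta}$ into an effective noise term, prove a one-step contraction through the identification/merge/least-squares/pruning substeps under $\delta_{4k}<0.47$, and unroll the recursion into the geometric bound. Your sketch just makes explicit the steps the paper delegates to the cited reference.
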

\begin{proof}
The proof  is similar to that of \cite[Theorem~$6.27$]{foucart2013invitation}.
\end{proof}

 \subsection{Combating $l_2$ Attack using basis pursuit}\label{subsection: l2 basis pursuit}
 Standard basis pursuit is chosen to counter $l_2$ perturbation~\cite{donoho2006stability} since it minimizes the $l_1$ norm of Fourier coefficients while constraining the $l_2$-norm of the injected error. Let us assume that the $l_2$ perturbation satisfies $||\bm{F}^{-1}\bm{e}||_2\leq \eta$ for a small $\eta$, and hence is  imperceptible to human eye. Since $\bm{F^{-1}}$ is an orthonormal matrix, we can write it as $||\bm{e}||_2\leq \eta$.

Let $\sigma_k({\hat{\bm{x}}})_1 \doteq \min_{\mathbf{||\bm{z}||_0\leq k}} ||{\hat{\bm{x}}}-\bm{z}||_1$. Performance bound for the standard basis pursuit algorithm is provided in the following theorem: 
 \begin{theorem} \label{theorem:basispursuit-bound l2}
Suppose that the $2k^{th}$ restricted isometry constant of the matrix $\bm{A}\in \mathbb{C}^{N \times N}$ satisfies $\delta_{2k}<0.624$. Then, for any ${\hat{\bm{x}}}\in \mathbb{C}^N$ and $\bm{y} \in \mathbb{C}^N$ with $||\bm{A}{\hat{\bm{x}}}-\bm{y}||_2\leq {\eta}$, a solution ${\hat{\bm{x}}}^{*}$ of $\min_{\mathbf{\bm{z}\in C^N}} ||\bm{z}||_1$ subject to $||\bm{Az}-\bm{y}||_2\leq {\eta}$ approximates the ${\hat{\bm{x}}}$ with errors
\begin{equation}\label{equation:l1 norm basis pursuit l2 attack}
\|{\hat{\bm{x}}} - {\hat{\bm{x}}}^*\|_1 \leq  C\sigma_k({\hat{\bm{x}}})_1 + D\sqrt{k}\eta
\end{equation}
\begin{equation}\label{equation:l2 norm basis pursuit l2 attack}
\|{\hat{\bm{x}}} - {\hat{\bm{x}}}^*\|_2 \leq  \frac{C}{\sqrt{k}}\sigma_k({\hat{\bm{x}}})_1 + D\eta
\end{equation}
where the constants $C,D > 0$ depend only on $\delta_{2k}$.
\end{theorem}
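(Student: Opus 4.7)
The plan is to follow the standard route from the restricted isometry property (RIP) to recovery guarantees for basis pursuit, as carried out in Chapter~6 of \cite{foucart2013invitation}. The argument passes through an intermediate condition, the $\ell_2$-robust null space property (RNSP) of order $k$, which I would derive from $\delta_{2k}<0.624$ and then use to compare $\hat{\bm{x}}^*$ with $\hat{\bm{x}}$. The quantitative threshold $0.624$ is precisely what is needed to make the RNSP constant $\rho$ strictly less than one after optimizing the bound on cross-correlations of disjointly supported vectors.

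First I would set $\bm{h} \doteq \hat{\bm{x}}^* - \hat{\bm{x}}$ and note that the feasibility of $\hat{\bm{x}}$ together with the constraint of the optimization problem gives, by the triangle inequality, $\|\bm{Ah}\|_2 \leq \|\bm{A}\hat{\bm{x}}^*-\bm{y}\|_2 + \|\bm{A}\hat{\bm{x}}-\bm{y}\|_2 \leq 2\eta$. Next I would let $S$ be the index set of the $k$ largest (in magnitude) entries of $\hat{\bm{x}}$, so that $\|\hat{\bm{x}}_{S^c}\|_1 = \sigma_k(\hat{\bm{x}})_1$. The optimality $\|\hat{\bm{x}}^*\|_1 \leq \|\hat{\bm{x}}\|_1$, combined with the decomposition $\hat{\bm{x}}^* = \hat{\bm{x}} + \bm{h}$ and the reverse triangle inequality on $S$, yields the standard cone inequality
\begin{equation}
\|\bm{h}_{S^c}\|_1 \leq \|\bm{h}_S\|_1 + 2\sigma_k(\hat{\bm{x}})_1.
\end{equation}

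The main technical obstacle is establishing the $\ell_2$-RNSP from the RIP bound $\delta_{2k}<0.624$, namely that there exist $\rho \in (0,1)$ and $\tau>0$ (depending only on $\delta_{2k}$) such that for every $\bm{v}\in\mathbb{C}^N$ and every $T\subset[N]$ with $|T|\leq k$,
\begin{equation}
\|\bm{v}_T\|_2 \leq \frac{\rho}{\sqrt{k}}\|\bm{v}_{T^c}\|_1 + \tau\|\bm{Av}\|_2.
\end{equation}
This is the delicate step: the explicit constant $0.624$ arises from controlling $\langle \bm{A}\bm{v}_T, \bm{A}\bm{v}_{T'}\rangle$ over disjoint $k$-sparse supports via $\delta_{2k}$, partitioning the complement $T^c$ into blocks of size $k$ sorted by decreasing magnitude, and choosing the contraction factor so that the resulting geometric series sums to a constant strictly less than one. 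I would invoke this construction essentially verbatim from \cite[Theorem~6.13]{foucart2013invitation}.

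Given the RNSP, the two error bounds follow from a standard calculation. Applying the RNSP to $\bm{h}$ with $T=S$ and using $\|\bm{h}_S\|_1 \leq \sqrt{k}\|\bm{h}_S\|_2$ together with the cone inequality and $\|\bm{Ah}\|_2 \leq 2\eta$ gives
\begin{equation}
\|\bm{h}_S\|_1 \leq \rho(\|\bm{h}_S\|_1 + 2\sigma_k(\hat{\bm{x}})_1) + 2\tau\sqrt{k}\,\eta,
\end{equation}
and rearranging absorbs the $\|\bm{h}_S\|_1$ term on the right to produce $\|\bm{h}\|_1 = \|\bm{h}_S\|_1 + \|\bm{h}_{S^c}\|_1 \leq C\sigma_k(\hat{\bm{x}})_1 + D\sqrt{k}\,\eta$, establishing \eqref{equation:l1 norm basis pursuit l2 attack}. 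For the $\ell_2$ bound \eqref{equation:l2 norm basis pursuit l2 attack}, I would decompose $\|\bm{h}\|_2 \leq \|\bm{h}_S\|_2 + \|\bm{h}_{S^c}\|_2$, apply the RNSP once more to control $\|\bm{h}_S\|_2$, and control $\|\bm{h}_{S^c}\|_2$ by the already-established $\ell_1$ estimate using the inequality $\|\bm{h}_{S^c}\|_2 \leq \|\bm{h}_{S^c}\|_1/\sqrt{k}$ that holds after a monotone rearrangement of tail entries (as in \cite[Proposition~2.3]{foucart2013invitation}). The constants $C,D>0$ depend only on $\rho$ and $\tau$, hence only on $\delta_{2k}$, as claimed.
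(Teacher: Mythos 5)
Your proposal is correct and takes essentially the same route as the paper, whose entire proof is a citation of \cite[Theorem~6.12]{foucart2013invitation}: the chain RIP $\Rightarrow$ $\ell_2$-robust null space property $\Rightarrow$ cone condition $\Rightarrow$ error bounds is exactly the argument behind that theorem, and your threshold discussion for $\delta_{2k}<0.624\approx 4/\sqrt{41}$ matches. One small imprecision in your $\ell_2$ step: the inequality $\|\bm{h}_{S^c}\|_2 \leq \|\bm{h}_{S^c}\|_1/\sqrt{k}$ is false for an arbitrary vector (e.g.\ a $1$-sparse one with $k\geq 2$); the standard fix is to peel off the $k$ largest entries of $\bm{h}_{S^c}$ (handling them via the RNSP) and apply the $\sigma_k(\cdot)_2 \leq \|\cdot\|_1/\sqrt{k}$ bound only to the remainder, as is done in \cite[Theorem~4.25]{foucart2013invitation}.
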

\begin{proof}
The proof is similar to \cite[Theorem~$6.12$]{foucart2013invitation}
\end{proof}
 From  Theorem~\ref{theorem:basispursuit-bound l2}, it is clear that, in order to guarantee unique recovery of largest $k$  Fourier coefficients, sensing matrix $\bm{A}$ should satisfy restricted isometry property (RIP)~\cite[Definition~$6.1$]{foucart2013invitation} of order $2k$. It has been observed that with high probability, random Gaussian and partial Fourier matrices satisfy RIP properties~\cite{cheraghchi2013restricted}, which ensures that any $2k$ columns in matrix $\bm{A}$ are linearly independent. 
 We can relate performance bound  \eqref{equation:l2 norm basis pursuit l2 attack} in spectral domain with that in time domain, since $\bm{F^{-1}}$ is an orthonormal matrix.

\subsection{Combating $l_0$ Attack using basis pursuit}
In Section~\ref{section:Algorithm}, we employ another modified version of basis pursuit  to counter $l_0$ attack; this involves a slightly different formulation. Let us assume that the perturbation error $\bm{e}$ is  $\tau$ sparse, and let us arrange perturbations of error vector $\bm{e}$ in ascending order $[e_1, e_2,...e_\tau...0]$. In $l_0$ attack, the attacker has constraints only on the number of Fourier coefficients that can be perturbed. Since according to the uncertainty principal~\cite{fefferman1983uncertainty} any image cannot be simultaneously narrow in the pixel domain as well as in spectral domain,  the $l_{\infty}$ norm of the injected error $\bm{e}$ under $l_0$ attack should have small enough to remain imperceptible to the human eye, i.e.,  $|\bm{e}|_{\infty} < {\eta^{'}}$, for some constant ${\eta^{'}}$. Now, it is well known that $\|\bm{e}\|_2   \leq  \|\bm{e}\|_1$, and we also notice that 
$\|\bm{e}\|_1 = |e_1| + |e_2| +...+ |e_\tau| \leq \tau|e_\tau|$, which yield $\|\bm{e}\|_2 \leq \tau|e_\tau| \leq \tau{\eta^{'}}$.

The performance bound for the modified basis pursuit algorithm under $l_0$ attack is provided in the following theorem:
\begin{theorem} \label{theorem:basispursuit-bound l0 attack}
Suppose that the $2k^{th}$ restricted isometry constant of the matrix $\bm{A}\in \mathbb{C}^{N \times N}$ satisfies $\delta_{2k}<0.624$. Then, for any ${\hat{\bm{x}}}\in \mathbb{C}^N$ and $\bm{y} \in \mathbb{C}^N$ with $||\bm{A}{\hat{\bm{x}}}-\bm{y}||_2\leq {\tau{\eta^{'}}}$, a solution ${\hat{\bm{x}}}^{*}$ of $\min_{\mathbf{\bm{z}\in C^N}} ||\bm{z}||_1$ subject to $||\bm{Az}-\bm{y}||_2\leq {\tau{\eta^{'}}}$ approximates the ${\hat{\bm{x}}}$ with errors
\begin{equation}\label{equation:l1 norm basis pursuit}
\|{\hat{\bm{x}}} - {\hat{\bm{x}}}^*\|_1 \leq  C\sigma_k({\hat{\bm{x}}})_1 + D\sqrt{k}\tau{\eta^{'}}
\end{equation}
\begin{equation}\label{equation:l2 norm basis pursuit}
\|{\hat{\bm{x}}} - {\hat{\bm{x}}}^*\|_2 \leq  \frac{C}{\sqrt{k}}\sigma_k({\hat{\bm{x}}})_1 + D\tau{\eta^{'}}
\end{equation}
where the constants $C,D > 0$ depend only on $\delta_{2k}$.
\end{theorem}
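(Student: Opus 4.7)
The plan is to reduce Theorem~\ref{theorem:basispursuit-bound l0 attack} directly to Theorem~\ref{theorem:basispursuit-bound l2} by verifying that, under the $l_0$ attack model, the residual $\|\bm{A}\hat{\bm{x}} - \bm{y}\|_2$ is bounded by $\tau \eta'$. Once that noise-level bound is in hand, the optimization problem solved in Theorem~\ref{theorem:basispursuit-bound l0 attack} is the very same basis pursuit program analyzed in Theorem~\ref{theorem:basispursuit-bound l2} (with the numerical value $\eta$ replaced by $\tau \eta'$), and the RIP hypothesis $\delta_{2k}<0.624$ is identical, so the two conclusions in \eqref{equation:l1 norm basis pursuit}--\eqref{equation:l2 norm basis pursuit} follow at once with the same constants $C,D$ inherited from Theorem~\ref{theorem:basispursuit-bound l2}.

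The concrete steps would be as follows. First, using $\bm{y}=\bm{A}\hat{\bm{x}}+\bm{\beta}$ with $\bm{\beta}=\bm{F}^{-1}\bm{e}$ and $\bm{A}=\bm{F}^{-1}$, I observe that $\bm{A}\hat{\bm{x}}-\bm{y}=-\bm{F}^{-1}\bm{e}$, and since $\bm{F}^{-1}$ is unitary, $\|\bm{A}\hat{\bm{x}}-\bm{y}\|_2=\|\bm{e}\|_2$. Second, I invoke the sparsity and $l_\infty$ budget of the $l_0$ adversary: because $\bm{e}$ has at most $\tau$ nonzero entries and $\|\bm{e}\|_\infty\leq \eta'$, the chain of inequalities $\|\bm{e}\|_2\leq \|\bm{e}\|_1\leq \tau\|\bm{e}\|_\infty\leq \tau\eta'$ (exactly the one motivated in the paragraph preceding the theorem) certifies that the feasibility constraint $\|\bm{A}\bm{z}-\bm{y}\|_2\leq \tau\eta'$ is satisfied by $\bm{z}=\hat{\bm{x}}$. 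Third, I feed this noise level into Theorem~\ref{theorem:basispursuit-bound l2} and read off the resulting error bounds.

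The bounds \eqref{equation:l1 norm basis pursuit} and \eqref{equation:l2 norm basis pursuit} then follow by a direct substitution $\eta \mapsto \tau\eta'$ in \eqref{equation:l1 norm basis pursuit l2 attack} and \eqref{equation:l2 norm basis pursuit l2 attack}, with the same constants depending only on $\delta_{2k}$. No new compressive-sensing machinery is required, since the quadratically-constrained basis pursuit program is invariant under the replacement of the scalar noise budget.

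The main obstacle, in fact the only subtle point, is justifying the step $\|\bm{e}\|_2\leq \tau\eta'$ cleanly, because a tighter bound $\|\bm{e}\|_2\leq \sqrt{\tau}\,\eta'$ is available for $\tau$-sparse vectors; however, the theorem statement intentionally uses $\tau\eta'$ (matching the prose derivation via $\|\bm{e}\|_2\leq\|\bm{e}\|_1$), so I will follow that looser but simpler chain so as to exactly reproduce the stated constants. Aside from this bookkeeping, the proof is a one-line reduction to Theorem~\ref{theorem:basispursuit-bound l2}.
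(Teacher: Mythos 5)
Your proposal is correct and follows essentially the same route as the paper: the paper's own proof simply cites Theorem~\ref{theorem:basispursuit-bound l2} together with the bound $\|\bm{e}\|_2 \leq \tau|e_\tau| \leq \tau\eta'$ derived in the preceding paragraph, which is exactly your reduction via the substitution $\eta \mapsto \tau\eta'$. Your additional remark that the sharper bound $\sqrt{\tau}\,\eta'$ is available but deliberately not used is a fair observation, but it does not change the argument.
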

\begin{proof}
This Theorem can be followed easily using Theorem~\ref{theorem:basispursuit-bound l2} and the fact that $\|\bm{e}\|_2 \leq \tau|e_\tau| \leq \tau{\eta^{'}}$ as discussed earlier.
\end{proof}

 \subsection{Combating $l_\infty$ Attack using basis pursuit}
 Let us assume that $||\bm{e}||_\infty < \eta''$. Now, since $\bm{F}$ is orthonormal, 
 \begin{equation}\label{equation:steps Relation l2 and l_infinity}
||\bm{F}^{-1}\bm{e}||^2_2= ||\bm{e}||_2^2 \leq N\max_i(|e_i|^2)=N||\bm{e}||^2_\infty
\end{equation}
and hence 
\begin{equation}\label{equation:Relation l2 and l_infinity}
||\bm{e}||_2 \leq \sqrt{N}||\bm{e}||_\infty \leq \sqrt{N}\eta''
\end{equation}
The performance guarantee for    modified basis pursuit under $l_\infty$ attack is provided in the following theorem:
\begin{theorem}\label{theorem:Modified basis pursuit l_infinity}
Suppose that the $2k^{th}$ restricted isometry constant of the matrix $\bm{A}\in \mathbb{C}^{N \times N}$ satisfies $\delta_{2k}<0.624$. Then, for any ${\hat{\bm{x}}}\in \mathbb{C}^N$ and $\bm{y} \in \mathbb{C}^N$  with $||\bm{A}{\hat{\bm{x}}}-\bm{y}||_2\leq \sqrt{N}{\eta''}$, a solution ${\hat{\bm{x}}}^{*}$ of $\min_{\mathbf{\bm{z}\in C^N}} ||\bm{z}||_1$ subject to $||\bm{Az}-\bm{y}||_2\leq \sqrt{N}{\eta''}$ approximates the ${\hat{\bm{x}}}$ with errors
\begin{equation}\label{equation:l1 norm modified basis pursuit}
{\|{\hat{\bm{x}}} - {\hat{\bm{x}}}^*\|}_1 \leq  C\sigma_k({\hat{\bm{x}}})_1 + D\sqrt{kN}\eta''
\end{equation}
\begin{equation}\label{equation:l2 norm modified basis pursuit}
\|{\hat{\bm{x}}} - {\hat{\bm{x}}}^*\|_2 \leq  \frac{C}{\sqrt{k}}\sigma_k({\hat{\bm{x}}})_1 + D\sqrt{N}\eta''
\end{equation}
where the constants $C,D > 0$ depend only on $\delta_{2k}$.
\end{theorem}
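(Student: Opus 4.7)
The plan is to reduce this theorem directly to Theorem~\ref{theorem:basispursuit-bound l2} by exploiting the already-derived chain of inequalities in \eqref{equation:steps Relation l2 and l_infinity}--\eqref{equation:Relation l2 and l_infinity} that convert the $l_\infty$ bound on the injected error into an $l_2$ bound. The two theorems share the same RIP hypothesis ($\delta_{2k} < 0.624$) and the same basis-pursuit program, so only the ambient noise level differs.

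First I would verify that the hypothesis of Theorem~\ref{theorem:basispursuit-bound l2} is met when we assume $\|e\|_\infty < \eta''$. Since $\bm{F}^{-1}$ is orthonormal, $\|\bm{\beta}\|_2 = \|\bm{F}^{-1}\bm{e}\|_2 = \|\bm{e}\|_2$, and the estimate already established in \eqref{equation:Relation l2 and l_infinity} gives $\|\bm{e}\|_2 \leq \sqrt{N}\eta''$. Consequently, $\|\bm{A}\hat{\bm{x}} - \bm{y}\|_2 = \|\bm{\beta}\|_2 \leq \sqrt{N}\eta''$, which matches the constraint in the minimization problem.

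Next I would invoke Theorem~\ref{theorem:basispursuit-bound l2} with the substitution $\eta \leftarrow \sqrt{N}\eta''$. The conclusions \eqref{equation:l1 norm basis pursuit l2 attack} and \eqref{equation:l2 norm basis pursuit l2 attack} then read
\begin{equation*}
\|\hat{\bm{x}} - \hat{\bm{x}}^*\|_1 \leq C\sigma_k(\hat{\bm{x}})_1 + D\sqrt{k}\cdot\sqrt{N}\eta'' = C\sigma_k(\hat{\bm{x}})_1 + D\sqrt{kN}\eta'',
\end{equation*}
\begin{equation*}
\|\hat{\bm{x}} - \hat{\bm{x}}^*\|_2 \leq \frac{C}{\sqrt{k}}\sigma_k(\hat{\bm{x}})_1 + D\sqrt{N}\eta'',
\end{equation*}
which are exactly \eqref{equation:l1 norm modified basis pursuit} and \eqref{equation:l2 norm modified basis pursuit}. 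Since the constants $C, D$ in Theorem~\ref{theorem:basispursuit-bound l2} depend only on $\delta_{2k}$, they carry over unchanged.

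There is no substantial obstacle here; the proof is a one-line reduction. The only thing worth flagging is to make the orthonormality step explicit, since the modification hypothesis is phrased in terms of $\bm{A}\hat{\bm{x}} - \bm{y} = -\bm{\beta} = -\bm{F}^{-1}\bm{e}$ rather than $\bm{e}$ directly; without invoking orthonormality of $\bm{F}^{-1}$, the passage from $\|\bm{e}\|_\infty \leq \eta''$ to the prescribed constraint $\|\bm{A}\hat{\bm{x}} - \bm{y}\|_2 \leq \sqrt{N}\eta''$ would not follow.
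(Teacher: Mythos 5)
Your proposal is correct and follows exactly the paper's own route: invoke Theorem~\ref{theorem:basispursuit-bound l2} with $\eta$ replaced by $\sqrt{N}\eta''$, justified by the bound \eqref{equation:Relation l2 and l_infinity}. The only difference is that you spell out the orthonormality step converting $\|\bm{e}\|_\infty \leq \eta''$ into $\|\bm{A}\hat{\bm{x}}-\bm{y}\|_2 \leq \sqrt{N}\eta''$, which the paper leaves implicit.
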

\begin{proof}
The proof follows easily from Theorem~\ref{theorem:basispursuit-bound l2} and (\ref{equation:Relation l2 and l_infinity}).
\end{proof}

\subsection{Combating $l_1$ attack using basis pursuit}
If $\bm{e}$ is such that $\|\bm{e}\|_1 < \eta$, then the error in the recovered image also satisfies $
\|\bm{F}^{-1}\bm{e}\|_2 = \|\bm{e}\|_2 \leq \|\bm{e}\|_1 \leq \eta$, and we can solve the same $l_1$ minimization problem with the same constraint as in Section~\ref{subsection: l2 basis pursuit} for $l_2$ attack. Similarly, its performance bound will be given by Theorem~\ref{theorem:basispursuit-bound l2}.

\section{The Compressive Sensing   based Adaptive Defense (CAD) algorithm}\label{section:Algorithm}
In this section, we propose our main algorithm to combat adversarial images. Since the CAD algorithm does not have any  prior knowledge on the type of attack, CAD algorithm  employs an adaptive version of the exponential weight algorithm~\cite{auer1995gambling},~\cite{chafaa2020exploiting} for exploration and exploitation  to assign a score on each possible attack type, and  chooses an appropriate recovery method based on the inferred nature of the injected error. In this paper, we consider four actions, i.e., four different ways to recover $k$-sparse Fourier coefficients, corresponding to   different types of perturbation:
\begin{itemize}
    \item {\bf CoSaMP (Action 1):} This greedy approach allow us to accurately approximate the Fourier coefficients initially when we do not have any belief for the type of attack. As iterations progress,   the algorithm explores other actions as well. 
    
    \item {\bf {Modified Basis pursuit \bm{$L_0$} (Action 2): }} A modified form of basis pursuit with novel constraint $\|\bm{e}\|_2 \leq \tau{\eta^{'}}$ is used to tackle $l_0$ perturbation attack.
    \item {\bf  Standard Basis pursuit \bm{$L_1$} and \bm{$L_2$} (Action 3): } Standard basis pursuit method is used to counter both $l_1$ and $l_2$ attack.
    \item {\bf  Modified Basis pursuit \bm{$L_\infty$} (Action 4): } Modified basis pursuit is used to tackle $l_\infty$ norm attack, using novel constraint given by (\ref{equation:Relation l2 and l_infinity}).
\end{itemize}

In the next three subsections, we discuss three major aspects of our proposed CAD algorithm: (i) adaptive exponential weight algorithm for choosing an appropriate recovery scheme, (ii) actions and feedback, and (iii) stopping criteria. 
\subsection{The adaptive version of exponential weight for choosing the recovery scheme}
Algorithm 1 summarizes the overall defence strategy. In each iteration $t$, the algorithm chooses randomly an action using a probability distribution $p_{a_i}(t)$ where  $a_i, i \in \{1,2,3,4\}$ denotes the action chosen. 

The probability of choosing an action is given by exponential weighting:  
\begin{equation}\label{equation:probability distribution}
p_{a_i}(t) = (1 - \gamma)\frac{\exp{(\sigma S_{a_i}(t - 1))}}{\sum_{m = 1}^4 \exp{(\sigma S_{a_m}(t - 1))}} + \frac{\gamma}{4}
\end{equation}
where $S_{a_i}(t - 1) = \sum_{\tau = 1}^{t - 1}r_{a_i}(\tau)$ is the total score up for the action $a_i$. Here $\sigma$ and $\gamma$ are tuning parameters such that $\sigma > 0$ and $\gamma \in (0,1)$.  The reward for action $a_i$ at the $t$-th iteration,  $r_{a_i}(t)$ is the following:
\begin{equation}\label{equation:reward}
r_{a_i}(t)= 
\begin{cases}
    \frac{\lambda}{p_{a_i}(t)},& \text{if } f_{a_i}(t) = 1\\
    \frac{-1}{1 - p_{a_i}(t)},   & \text{if } f_{a_i}(t) = 0\\
    0 &\text{if } a_i \text{ is not chosen in the $t$-th iteration}
\end{cases}
\end{equation}
Here $f_{a_i}(t)$ is a binary feedback that is obtained by checking certain conditions for action $a_i$  in the $t$-th iteration; this feedback signifies the applicability of action $a_i$. If action $a_i$ is chosen in the $t$-th iteration and if its feedback $f_{a_i}(t)=1$, the actual reward $\lambda>0$  is divided by $p_{a_i}(t)$ so that an  unbiased estimate of the reward is obtained. On the other hand, if $f_{a_i}(t)=0$, then a penalty of $-1$ is assigned for $a_i$. However, this penalty is divided by $(1-p_{a_i}(t))$ to ensure that, if $p_{a_i}(t)$ is small because it has not been chosen frequently earlier, the penalty incurred by $a_i$ in the $t$-th iteration remains small.

The action in each iteration is chosen   in the following way. With probability $\gamma$, one action is randomly chosen from uniform distribution. This is done to ensure sufficient exploration of all recovery algorithms irrespective of the reward accrued by them at the initial phase. On the other hand, with probability $(1-\gamma)$, each action is chosen randomly with a probability depending on its accumulated score.

\subsection{Detailed discussion on actions and feedback}
In action-1 CosaMP, the following steps are involved:
\begin{itemize}
    \item {\bf Identification: } Steps 1 and 2 provide the signal proxy for the residual error vector and find out the indices of largest $2k$ entries.
    \item {\bf Support Merger: } Step 3 merges the set of new indices with set of indices of current Fourier coefficients approximation.
    \item {\bf Estimation: } Step 4 computes the least squares to obtain the approximate Fourier coefficients on merged set $R$.
    \item {\bf Pruning: } Steps 5 and 6 maintain only  largest $k$ Fourier coefficients obtained from least square approximation.
\end{itemize}

Details of each action 2,3 and 4 are mentioned in the algorithm.

\begin{algorithm}
\caption{CAD algorithm}
{\bf Input:} The measurement matrix $\bm{A}={\bm{F}^{-1}}$, test image vector $\bm{y}$, dimension of image vector $N$, sparsity parameters  $\tau$ and $k$, perturbation levels $\eta$, $\eta'$ and $\eta''$, Mahalanobis Distance (MD) threshold $\theta$, stopping time $T$, stopping time threshold parameters $\Delta$ and $\delta$, and also  $\alpha$, $\beta$, $m$, $\gamma \in (0,1)$, $\lambda>0$ $ \sigma > 0$.
 
 {\bf Initialisation:} Set Cumulative score $S_{a_i}(0) = 0 \forall i \in \{1,2,3,4\}$, Fourier coefficients ${\hat{\bm{x}}}^0 = 0$, residual error $\bm{v}^0 = y$ and $p_{a_i}(1) = 1/4$ for all actions in $\mathcal{A} = \{ a_1, a_2, a_3, a_4 \}$ \\
 
 \KwResult {${\hat{\bm{x}}}$ which is $k$ sparse approximation of Fourier coefficients}
 {\bf Actions:}
 
  \begin{itemize}
  
  \item {\boldmath$a_1$:} {\bf Action 1} \\
  \begin{enumerate}
  \item $\bm{z} \leftarrow \bm{A}^{*}\bm{v}^{t-1}$ 
  \item $\Omega \leftarrow supp(\bm{z}_{2k})$   
 \item $R \leftarrow \Omega \cup supp({\hat{\bm{x}}}^{t-1})$ 
 \item $\bm{b}_{|R} \leftarrow \bm{A}^{\dagger}_R \bm{y}$   
 \item $\bm{b}_{|R^c} \leftarrow 0$  
 \item   {\bf Return:} ${\hat{\bm{x}}}^t \leftarrow \bm{b}_k$ 
 \end{enumerate}
   \item {\boldmath$a_2$:} {\bf Action 2} \\
 {\bf Return:} ${\hat{\bm{x}}}^{t} \leftarrow \mathop {\arg \min }\limits_{\bm{z} \in \mathbf{C}^N} \|\bm{z}\|_1$ \bf{s.t.} $\|\bm{Az}-\bm{y}\|_2 < \tau\eta^{'}$ \\
 
  \item {\boldmath$a_3$:} {\bf Action 3} \\
 {\bf Return:} ${\hat{\bm{x}}}^{t}\leftarrow\mathop {\arg \min }\limits_{\bm{z} \in \mathbf{C}^N} \|\bm{z}\|_1$ \bf{s.t.} $\|\bm{Az}-\bm{y}\|_2 < \eta$\\
 
 \item {\boldmath$a_4$:} {\bf Action 4} \\
{\bf Return:} ${\hat{\bm{x}}}^{t}\leftarrow\mathop {\arg \min }\limits_{\bm{z} \in \mathbf{C}^N} \|\bm{z}\|_1$ \bf{s.t.} $\|\bm{Az}-\bm{y}\|_2 < \sqrt{N}\eta^{''}$\\
 \end{itemize}

 \For{ $t = 1,...,T$} {
 \begin{enumerate}
     \item  Select action $a_i$, $i \in \{1,2,3,4\}$ with sampling distribution $p_{a_i}(t)$ using (\ref{equation:probability distribution}).
     \item  Perform some more number of initial iterations of chosen action $a_i$ compared to the last time when $a_i$ was chosen.  
     \item Find top $k$ Fourier coefficients i.e. ${\hat{\bm{x}}}^t = \hat{\bm{x}}_{h(k)}$ using the output in the previous step.
     \item Calculate the residual error $\bm{v}_t \leftarrow \bm{y}-\bm{A}{{\hat{\bm{x}}}^{t}}$.
     \item Feedback $f_{a_i}(t)=1$ is set if following condition holds for the chosen action:
     \begin{itemize}
        \item {\boldmath$a_1$:} $\|\bm{v}_t\|_2 < \alpha ${ \bf Or }$ MD < \theta $ {\bf And} $\|\bm{v}_t\|_\infty < m $
         \item {\boldmath$a_2$:} $\|\bm{v}_t\|_2 > \alpha $ {\bf And }$\|\bm{v}_t\|_0 < \tau $
         \item {\boldmath$a_3$:} $\|\bm{v}_t\|_2 > \alpha $ {\bf And} $m < \|\bm{v}_t\|_\infty < \beta $
         \item {\boldmath$a_4$:} $\|\bm{v}_t\|_2 > \alpha $ {\bf And } $\|\bm{v}_t\|_\infty > \beta$
     \end{itemize}
    \item Calculate reward $r_{a_i}(t)$ using (\ref{equation:reward}).
    \item Update cumulative score
    \begin{itemize}
        \item $S_a(t) = S_a(t-1) + r_{a_i}(t), a = a_i $
        \item $S_a(t) = S_a(t-1), \forall a \neq a_i$
    \end{itemize}

     \item \If {$p_{a_i}(t) > \Delta${ \bf Or} $\|\bm{v}_t\|_2 < \delta$}
    {
        { \bf break}
    }
    \end{enumerate}
    
 }
{\bf Recovery method chosen} = $ \mathop {\arg \max }\limits_{a} S_a(T)$\\
\If {$ \mathop {\max }\limits_{a} S_a(T) \leq 0$}
    {
        {\bf Recovery method chosen} = CoSaMP
    }
\end{algorithm}

We choose the following feedback criterion i.e. $f_{a_i} = 1 $ for each action:
\begin{itemize}
    \item{\bf Action 1:} It is quite intuitive that if there is no attack then the $l_2$ norm of the residual error will be upper bounded by just recovery error at the end of the algorithm. Hence, we set its upper bound equal to  the parameter $\alpha$. The maximum absolute value in the residual vector is upper bounded by the parameter $m$. If these inequalities are satisfied in each iteration, then the algorithm concludes that there is no attack, hence $f_{a_i}=1$. We can also calculate the Mahalanobis distance (MD)~\cite{de2000mahalanobis} using (\ref{equation:mahalanobis}), between the residual error  of a test image and that of  the clean images. This is used as another alternative criterion to determine whether the image is malicious or not by comparing with some threshold parameter $\theta$.  
    \begin{equation}\label{equation:mahalanobis}
        MD^2 = (\bm{v} - \hat{\bm{m}})^T \bm{C}^{-1} (\bm{v} - \hat{\bm{m}})
    \end{equation}
    where $\bm{v}$ is the residual error of test image and $\hat{\bm{m}}$ and $\bm{C}$ is the mean and covariance of residual error of clean images respectively. This is reminiscent of  the popular $\chi^2$ detector used in anomaly detection.
    \item{\bf Action 2:} Under $l_0$ attack, the number of non-zero entries in its perturbation vector should be upper bounded by some parameter $\tau$. Hence, we use the conditions $\|\bm{v}_t\|_2 > \alpha $  and $\|\bm{v}_t\|_0 < \tau $. This can be explained from the fact that $\bm{v}_t$ includes perturbation error along with recovery error.
    \item{\bf Action 3:} Along with the previous   condition $\|\bm{v}_t\|_2 > \alpha $, here we assume that maximum absolute perturbation in case of $l_2$ or $l_1$ attack is upper bounded by  $\beta$ and lower bounded by     $m$.
    \item{\bf Action 4:} Checking for $l_\infty$ attack additionally requires us to verify whether the maximum residual error component which acts as a proxy for the maximum perturbation is greater than  $\beta$.
\end{itemize}
Choosing an action yields a feedback status which influences the reward values as in \eqref{equation:reward} and consequently the probabilities of choosing all actions.

We numerically observed, in addition to the above feedback criteria, that the residual vector contains a   large number of nonzero entries for actions 3 and 4 for adversarial grayscale images such as the MNIST dataset. Hence,   in our experiments in Section~\ref{section: Experiments}, we additionally check whether $\Vert \bm{v}_t \Vert_0$ is above a threshold.

\subsection{Stopping criteria}
CAD can be run till the maximum limit $T$ for the number of iterations is reached. However, if either of the two conditions $p_{a_i}(t) > \Delta$ for some $i \in \{1,2,3,4\}$ and $\|\bm{v}_t\|_2 < \delta$ is met before that for two given threshold parameters $\Delta$ and $\delta$, then the iteration will stop. The condition  $p_{a_i}(t) > \Delta$ means that it is optimal to choose action $i$ with high probability, and hence no further exploration is required. The condition $\|\bm{v}_t\|_2 < \delta$ means that most likely the test image is clean, and hence there is no need to investigate it further.

At the end, the appropriate recovery method is chosen according to  the action   which achieves the maximum cumulative score. However, if the maximum score is negative at this time, then it implies that CAD is unable to clearly identify the type of attack, and hence   CoSaMP is chosen as a default recovery method due to its robustness.

\section{Complexity Analysis}\label{section:Complexity Analysis}
CoSaMP has the following five steps:  forming signal proxy, identification, support merger, least square estimation and  pruning. The sensing matrix $\bm{A} = \bm{F}^{-1}$ has   dimension $N \times N$ and   sparsity $k$. Hence, following standard matrix vector multiplication,  time complexity for each of the five  steps~\cite{needell2009cosamp} are obtained as  $\mathcal{O}(N^2)$, $\mathcal{O}(N)$, $\mathcal{O}(k)$, $\mathcal{O}(kN)$, $\mathcal{O}(k)$ respectively. Hence, CoSaMP has time complexity   $\mathcal{O}(N^2)$ for each  iteration~$t$. For actions 2,3 and 4, we need to  solve $l_1$ minimization problem with different constraints, which can be solved efficiently by a standard convex optimization solver in polynomial time $\mathcal{O}(p(N))$.

For any action, choosing the top $k$ Fourier coefficients is similar to the CoSaMP pruning step, and it can be done by a  sorting algorithm   in $\mathcal{O}(k \log k)$ time.  The number of operations required to calculate the residual error $\bm{v}_t=\bm{y}- \bm{A} \hat{\bm{x}}^t$ for  a  $k$ sparse vector $\bm{\hat{x}}^t$ is $\mathcal{O}(kN)$. Calculating various  norms such as $l_2$, $l_0$ and $l_\infty$ require $\mathcal{O}(N)$ each time. Also, the number of iterations is upper bounded by $T$. 

Hence, the overall computational complexity of CAD will be of $\mathcal{O}(TN^2+ T p(N))$.

\section{Experiments}\label{section: Experiments}
We conducted our experiments on MNIST~\cite{lecun1998mnist} and CIFAR-10~\cite{krizhevsky2009learning} data sets for pixels lying in between $[0,1]$. Discrete Cosine Transform (DCT) domain is used in experiments to get sparse coefficients. We consider only white box attacks since the attacker in a black box attack has access to much less information than a white box attacker, and hence is less effective in general. All  experiments were performed in Google Colab.
\subsection{Attack setup}
 Foolbox~\cite{rauber2017foolbox} is an open source library available in python that can exploit the vulnerabilities of DNNs and  generate  various malicious attacks. All our evaluations are done using the 2.3.0 version of Foolbox library. We evaluate our compressive sensing based adaptive defense (CAD) against five major state-of-the-art white box adversarial attacks. They are projected gradient descent (PGD)~\cite{madry2017towards}, basic iterative method (BIM)~\cite{kurakin2016adversarial}, fast gradient sign method (FGSM)~\cite{goodfellow2014explaining}, Carlini Wagner(L2)(CW) attack~\cite{carlini2017towards} and Jacobian saliency map attack (JSMA)~\cite{papernot2016limitations}. In the PGD attack, 40 iterations steps with random start are used in Foolbox. For the C\&W attack, we use 10,000 iteration steps with a learning rate of 0.01. In the BIM attack, the number of iterations is set to 10 and limit on perturbation size is set to 0.3. We use the default parameters  of foolbox library for the FGSM attack. In JSMA attack maximum iteration is set to 2000 and perturbation size in $l_0$ norm is set to 20 and 35 for MNIST and CIFAR-10 respectively. All attacks used in this are bounded under $l_\infty$ norm with perturbation size $\epsilon = 0.3$ and $\epsilon = 8/255$ for MNIST and CIFAR-10 respectively.

As the authors of~\cite{sharma2019effectiveness} observed that data sets  such as MNIST ($28 \times 28$) and CIFAR-10 ($32 \times 32$) are  too low dimensional to exhibit a diverse frequency spectrum. Hence, we do not test our algorithm against low frequency adversarial perturbation attacks.
\subsection{Training and testing setup}
For training, we use clean, compressed, reconstructed images using only top $k$ DCT coefficients. Then we test the DNN based classifier  against perturbed images (without any  reconstruction) and note down its adversarial accuracy and loss.   Then we employ our proposed CAD algorithm to reconstruct the adversarial images to obtain corrected classification accuracy and loss for each attack.

The model architecture used for MNIST is described in Table~\ref{table Mnist model architecture}. We use an RMSprop optimizer in Keras with cross-entropy loss for MNIST. For CIFAR-10 we use ResNet (32 Layers)~\cite{targ2016resnet} model having Adam optimizer with cross-entropy loss having batch size = 128 and epoch = 50. We randomly choose 7000 and 2050 images for MNIST and CIFAR-10 respectively from the training set, and train the classifier with its reconstructed and compressed (reconstructed by taking top $k$ DCT coefficients of the image) images. In MNIST, we take 1000 corrupted images randomly from the test set for each attack. Since 3 channels are available in CIFAR-10,  attacks are much expensive to execute and time complexity is  $\mathcal{O}(3TN^2+ 3T p(N))$. Hence, we choose only 250 images randomly from the test set for each attack to evaluate our CAD algorithm.

\begin{table}[ht!]
    \caption{MNIST model architecture with batch size = 128, epochs = 15}
    \begin{tabular}{ |c|c|c| } 
        \hline 
        Layer & Type & Properties\\
        \hline
         1 & Convolution & 32 channels, 3x3 Kernel, activation=Relu\\ 
         2 & Convolution & 64 channels, 3x3 Kernel, activation=Relu\\ 
         3 & Convolution & 64 channels, 3x3 Kernel, activation=Relu\\ 
         4 & Max pooling & 2x2, Dropout = 0.25, activation=Relu \\ 
         5 & Fully connected & 128 neurons, activation=Relu \\
         6 & Fully connected & 64 neurons, activation=Relu, Dropout = 0.5\\
         7 & Fully connected & 10 neurons, activation = Softmax\\
        \hline
    \end{tabular}
    \label{table Mnist model architecture}
\end{table}

Various parameters used in the algorithm are as follows:
\begin{itemize}
    \item {\bf MNIST: } $k = 80$, $\alpha = 8$, $\beta = 5$, $m = 1.8$, $\tau = 15$, $\theta = 65$, $\gamma = 0.07$, $\sigma = 1.01$, $\lambda = 1.25$, different perturbation levels $\eta = 0.3$, ${\eta^{'}} = 0.15$ and ${\eta^{''}} = 0.04$.
    \item {\bf CIFAR-10: } $k = 300$, $\alpha = 7$, $\beta = 2.8$, $m = 1.4$, $\tau = 35$, Mahalanobis distance threshold~\cite{de2000mahalanobis} parameter $\theta$ is equal to 3.3, 3, 3.2 respectively for each channel, $\gamma = 0.45$, $\sigma = 1.0$, $\lambda = 5$, different perturbation levels $\eta = 0.5$, ${\eta^{'}} = 0.05$ and ${\eta^{''}} = 0.05$. 
\end{itemize}

 Stopping criterion parameters are $\Delta = 0.8$ and $\delta = 2$. For  action-2, although we mention in the algorithm that the number of non-zero entries should be less than  $\tau$ to satisfy the feedback condition,  there will be some recovery error components in practice. Hence, we count the number of entries greater than  a threshold 0.5 in the residual error vector, instead of exactly counting the number of non-zero entries. In CIFAR-10, we run our algorithm channel-wise to obtain reconstructed coefficients.

\subsection{Clean data accuracy and reverse engineering attack}
It has been observed that defenses that employ adversarial training suffer from the problem of clean data accuracy, i.e., the classifiers trained with adversarial images perform poorly for clean images. In order to address this problem, we evaluate the cross-entropy loss and classification accuracy of our algorithm on 10,000 uncompressed, clean test images in the first row of Table~\ref{table classification MNIST} and~\ref{table classification cifar10} for both MNIST and CIFAR-10. Our results show  that our trained model using reconstructed and compressed images works effectively in classifying the uncompressed clean images, compared to the competing algorithms.

 Reverse engineering attacks allow the attacker to determine the decision rule by monitoring the output of the classifier for sufficient number of query images~\cite{tramer2016stealing}. CAD algorithm randomly chooses the recovery algorithm based on the nature of the recovery error, hence it is completely non-deterministic. In order to impart more uncertainty to the recovered coefficients and confuse the attacker, one can randomly initialize $\hat{\bm{x}}^{0}$ instead of initializing it with all zero vectors. Hence, generating a  reverse engineered attack   for our proposed CAD algorithm becomes difficult.

\subsection{Classification accuracy}
For comparison, we take 5 state-of-the-art defenses proposed in recent years:
\begin{enumerate}
    \item {\bf CRD~\cite{dhaliwal2020compressive}:} We choose CRD defense for comparison since it is also based on compressive sensing. The authors  of \cite{dhaliwal2020compressive} provided different recovery methods of Fourier coefficients for each norm attack, but did not test  their defense against gradient based attacks like FGSM, PGD since these two attacks do not yield    any norm condition.
     \item {\bf Madry et al. defense~\cite{madry2017towards}:}  It is min-max optimization based defense using adversarial training to combat adversarial attack. 
     \item {\bf Yu et al. defense~\cite{yu2018interpreting}:}  This  also is based on adversarial training, using decision surface geometry as parameter.
     \item {\bf CMT~\cite{wang2020defending}:}  This defense uses collaborative learning to increase the complexity in searching adversarial images for the attacker. It is applicable   for non-targeted, blackbox and greybox attacks.
     \item {\bf Bafna et al. defense~\cite{bafna2018thwarting}:} This defense is based on compressive sensing techniques and is only applicable for $l_0$ attack. 
     \end{enumerate}

The classification accuracy of the all defenses are    computed for  targeted white box adversarial images (since white box attack is more effective), except for the CMT~\cite{wang2020defending} defense scheme which is evaluated for  non-targeted black box adversarial images.   Experimental results for cross-entropy loss and classification accuracy of both adversarial and clean images for each attack are provided in Tables~\ref{table classification MNIST} and \ref{table classification cifar10}.  It is clear from the tabulated results that CAD algorithm is outperforming CRD except in CW(L2) attack in MNIST dataset where the performance is slightly worse.  The defenses proposed in \cite{madry2017towards} and \cite{yu2018interpreting} perform well for data sets having grayscale images (e.g., MNIST), but exhibit very    low classification accuracy for data sets having colored images (e.g., CIFAR-10). Finally, we compare CAD algorithm  with CMT \cite{wang2020defending}. Though white box attack usually performs  better than black box attack, proposed CAD algorithm against white box attack achieves much better classification accuracy compared to CMT under black box attack,  for the   MNIST data set. On the other hand, CAD algorithm against white box attack achieves comparable  classification accuracy compared to CMT under black box attack,  for the   CIFAR-10 data set.  It is to be noted that   CMT exhibits extremely poor classification accuracy against CW(L2) attack for both MNIST and CIFAR-10 data sets. Since PGD and BIM attacks are very similar, many of the defence papers demonstrate their performance against only one of PGD and BIM, as seen in Tables~\ref{table classification MNIST} and \ref{table classification cifar10}.

In Table~\ref{table l0 attack}, we compare the existing defenses against $l_0$ norm attack JSMA. It can be observed that CAD algorithm significantly outperforms others for MNIST. For CIFAR-10, CAD algorithm achieves high classification accuracy compared to  CRD~\cite{dhaliwal2020compressive} but poorer   accuracy compared to  CMT~\cite{wang2020defending}; however, one should remember that here CAD algorithm is evaluated against white box JSMA attack, while CMT is evaluated against black box JSMA attack. 

We also illustrate the reconstruction quality of randomly selected images (after performing inverse DCT on recovered coefficients) for each attack in Figures~\ref{Reconstruction quality of MNIST } \& \ref{Reconstruction quality of CIFAR-10 }. It is observed that the reconstructed images have high classification accuracy.

\begin{figure}
\centering
\begin{tabular}{ccccc}
\subfloat CW-L2 &
\subfloat FGSM &
\subfloat PGD &
\subfloat BIM &
\subfloat JSMA \\
{\includegraphics[width = 0.5in]{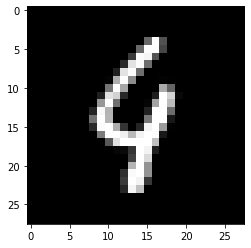}} &
{\includegraphics[width = 0.5in]{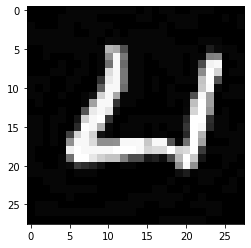}} &
{\includegraphics[width = 0.5in]{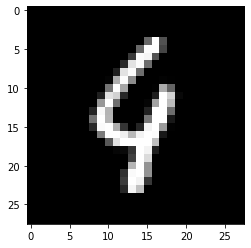}} &
{\includegraphics[width = 0.5in]{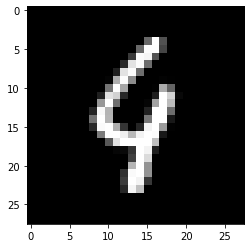}} &
{\includegraphics[width = 0.5in]{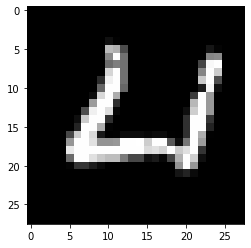}} \\
{\includegraphics[width = 0.5in]{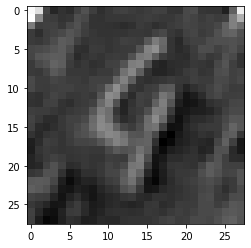}} &
{\includegraphics[width = 0.5in]{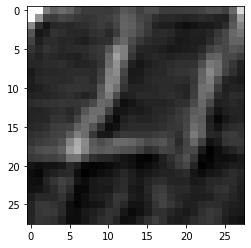}} &
{\includegraphics[width = 0.5in]{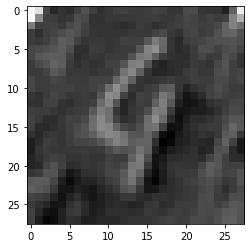}} &
{\includegraphics[width = 0.5in]{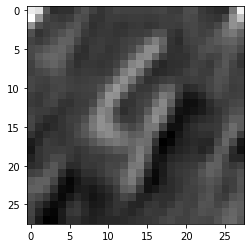}} &
{\includegraphics[width = 0.5in]{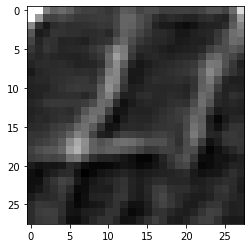}}\\
\end{tabular}
\caption{Reconstruction quality of MNIST images against various attacks. The first row shows the adversarial images and its reconstructed image is shown in second row.}
\label{Reconstruction quality of MNIST }
\end{figure}
\begin{figure}
\centering
\begin{tabular}{ccccc}
\subfloat CW-L2 &
\subfloat FGSM &
\subfloat PGD &
\subfloat BIM &
\subfloat JSMA \\
{\includegraphics[width = 0.5in]{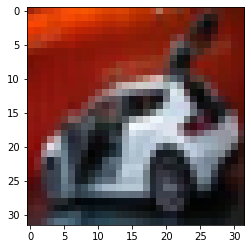}} &
{\includegraphics[width = 0.5in]{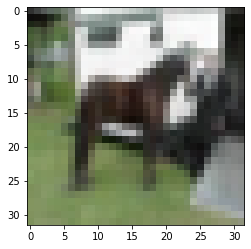}} &
{\includegraphics[width = 0.5in]{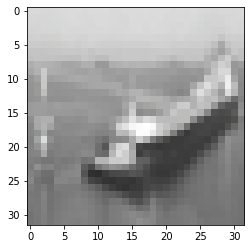}} &
{\includegraphics[width = 0.5in]{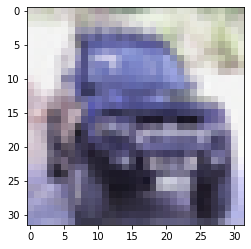}} &
{\includegraphics[width = 0.5in]{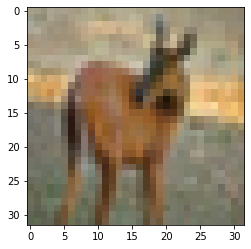}} \\
{\includegraphics[width = 0.5in]{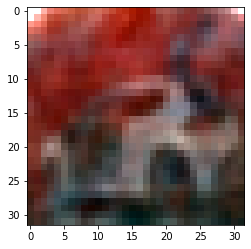}} &
{\includegraphics[width = 0.5in]{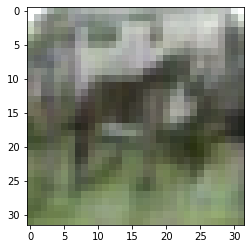}} &
{\includegraphics[width = 0.5in]{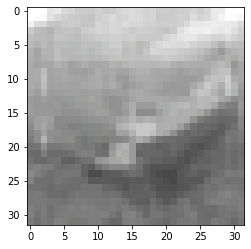}} &
{\includegraphics[width = 0.5in]{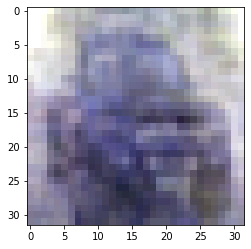}} &
{\includegraphics[width = 0.5in]{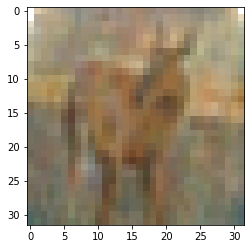}}\\
\end{tabular}
\caption{Reconstruction quality of CIFAR-10 images against various attacks. The first row shows the adversarial images and its reconstructed image is shown in second row.}
\label{Reconstruction quality of CIFAR-10 }
\end{figure}

\begin{table*}[ht!]
        
        \caption{Experimental results of various attack on MNIST and comparison with state of the art defenses}
        \begin{tabular}{p{1.5cm}p{1.5cm}p{1.5cm}p{1.5cm}p{1.5cm}p{1.5cm}p{1.5cm}p{1.5cm}p{1.5cm}}
                \hline
                Attack & Adversarial Acc.(\%) & Adversarial Loss & CAD Corrected Acc.(\%) & CAD Corrected Loss & CRD Acc.(\%)~\cite{dhaliwal2020compressive} & Madry et al. Acc.(\%)~\cite{madry2017towards} & Yu et al. Acc.(\%)~\cite{yu2018interpreting} & CMT (Black Box Setting) Acc.(\%)~\cite{wang2020defending}  \\
                \hline
                No Attack & - & - & 98.45 & 0.386 & 99.17 & 98.8 & 98.4 & 99.5 \\
                FGSM & 1.2 & 0.94 & 93.9 & 0.69 & - & 95.6 & 91.6 & 84.2\\
                PGD & 0.0 & 1.02 &  99.75 & 0.001  & - & 93.2 & - & -\\
                BIM & 0.0 & 1.054 &  99.7 & 0.014 & 74.7 & - & 88.1 & 79.5\\
                CW(L2) & 1.5 & 2.21 & 86.46 & 0.914 & 92.4 & 94 & 89.2 & 1.2 \\
                
                \hline
        \end{tabular}
        \label{table classification MNIST}
\end{table*}

\begin{table*}[ht!]
        
        \caption{Experimental results of various attack on CIFAR-10 and comparison with state of the art defenses}
        \begin{tabular}{p{1.5cm}p{1.5cm}p{1.5cm}p{1.5cm}p{1.5cm}p{1.5cm}p{1.5cm}p{1.5cm}p{1.5cm}}
                \hline
                Attack & Adversarial Acc.(\%) & Adversarial Loss & CAD Corrected Acc.(\%) & CAD Corrected Loss & CRD Acc.(\%)~\cite{dhaliwal2020compressive} & Madry et al. Acc.(\%)~\cite{madry2017towards}  & Yu et al. Acc.(\%)~\cite{yu2018interpreting} & CMT (Black Box Setting) Acc.(\%)~\cite{wang2020defending}   \\
                \hline
                No Attack & - & - & 84.41 & 0.916 & 84.9  & 87.3 & 83.1 & 80.1  \\
                FGSM & 0.0 & 1.602 & 75.33 & 1.201 & - & 56.1 & 68.5 & 81.8 \\
                PGD & 0.0 & 1.485 &  76.0 & 1.021  & - & 45.8 & - & -\\
                BIM & 0.0 & 1.487 & 78.66  & 0.926 & 49.4 & - & 62.7 & 80.2\\
                CW(L2) & 0.45 & 1.635 & 75.53 & 1.007 & 72.3 & 46.8 & 60.5 & 4.5 \\
                
                \hline
        \end{tabular}
        \label{table classification cifar10}
\end{table*}

\begin{table*}[ht!]
        
        \caption{Experimental result against $l_0$ norm attack JSMA on MNIST and CIFAR-10 }
        \begin{tabular}{p{1.5cm}p{1.5cm}p{1.5cm}p{1.5cm}p{1.5cm}p{1.5cm}p{1.5cm}p{1.5cm}p{1.5cm}}
                \hline
                Dataset & Perturbation size (t) & Adversarial Acc.(\%) & Adversarial Loss & CAD Corrected Acc.(\%) & CAD Corrected Loss & CRD Acc.(\%)~\cite{dhaliwal2020compressive} & Bafna et al. Acc.(\%)~\cite{bafna2018thwarting} & CMT (Black Box Setting) Acc.(\%)~\cite{wang2020defending}   \\
                \hline
                
                MNIST & 20 & 0.0 & 2.049 & 94.9 & 0.1483  & 55.9 & 90.8 & 81.3 \\
                CIFAR-10 & 35 & 12.4 & 10.344 &  71.6 & 7.01 & 67.3 & - & 79.3\\

                \hline
        \end{tabular}
        \label{table l0 attack}
\end{table*}

\begin{figure}
\centering
\begin{tabular}{c}
\subfloat{\includegraphics[width = 3.5 in, height = 2 in]{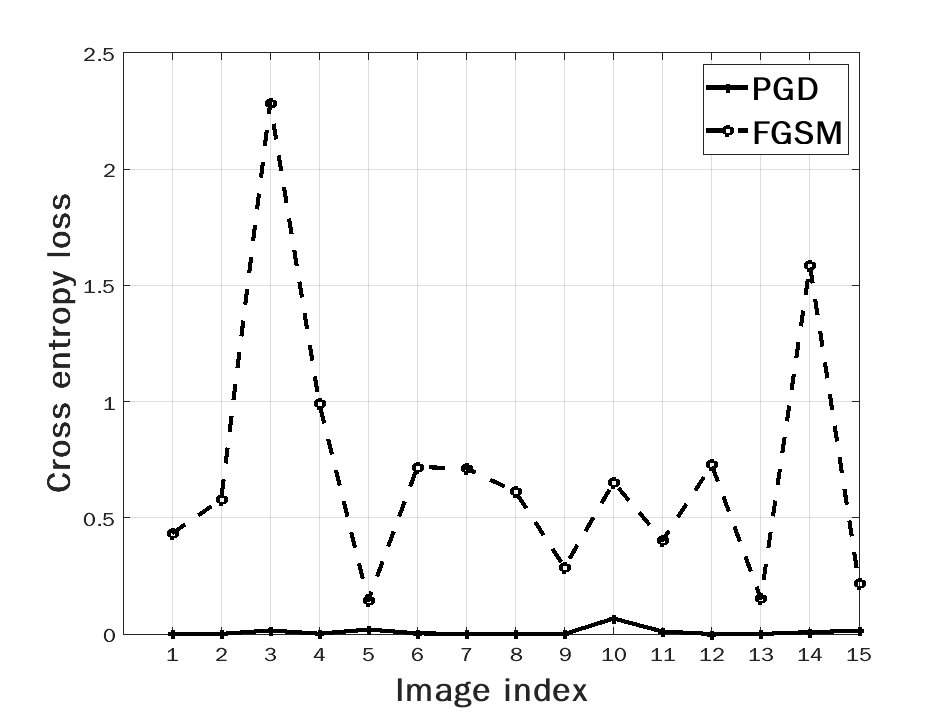}} \\
\subfloat{\includegraphics[width = 3.5 in, height = 2 in]{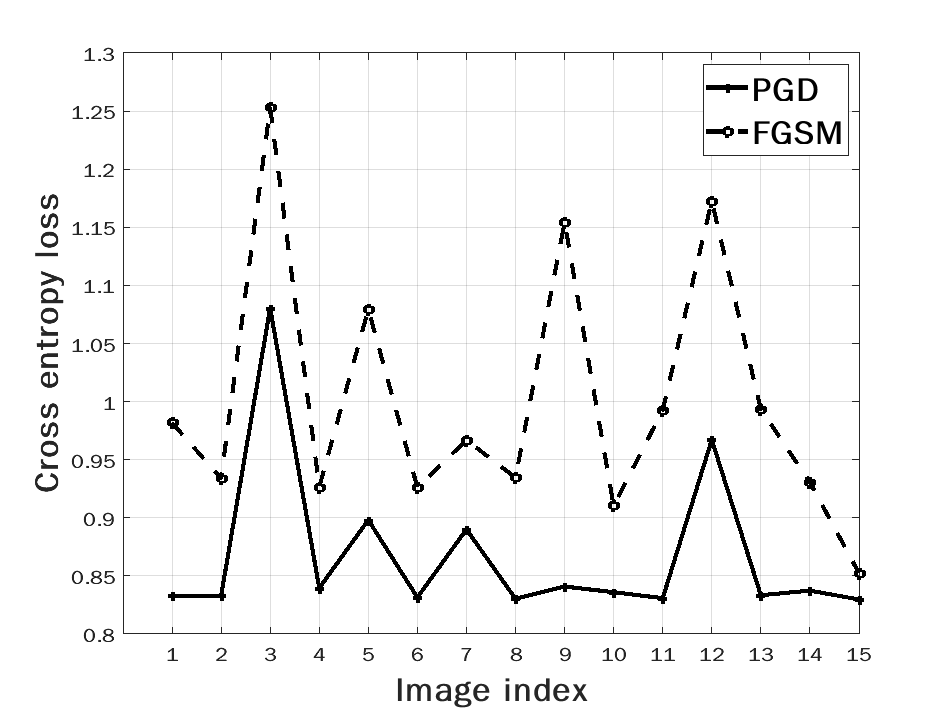}} \\
\end{tabular}
\caption{Cross entropy loss under FGSM and PGD attack for each 15 Image indexes. Top: MNIST, Bottom: CIFAR-10}
\label{PLots Cross entropy loss vs Image index}
\end{figure}

\subsection{Obfuscated gradients}
 Most of recently proposed defenses are suffer from the problem of obfuscated gradients \cite{athalye2018obfuscated},\cite{carlini2019evaluating}; the proposed defense often does not use accurate gradients while generating adversarial images for the testing phase. Here we argue that CAD algorithm does not cause gradient masking, the reasons being the following: 
 \begin{enumerate}
     \item Iterative attacks are usually superior to single step attacks. In order to verify this, we randomly select 15 images on which foolbox can craft a perturbed image. We choose FGSM and PGD as single step attack and iterative attack respectively,    evaluate each image separately on our model, and plot the  cross-entropy loss for each image. From Figure~\ref{PLots Cross entropy loss vs Image index} it can be seen clearly that, for each image, cross-entropy loss is always less for PGD attack compared to FGSM attack , for both MNIST and CIFAR-10. This matches the well-known fact that iterative attack is superior to single step attack. 
     \item We apply unbounded distortion for both FGSM and PGD and observe that each image is misclassified. Hence, the attack exhibits $100 \%$   success rate, which is another desired condition.
 \end{enumerate}

\section{Conclusion}\label{section:Conclusion}
In this paper, we have proposed a compressive sensing based adaptive defense (CAD) scheme. CAD algorithm chooses an appropriate recovery algorithm in each iteration using the multi-armed bandit theory, based on the observed nature of the residual error.  While  the standard basis pursuit algorithm was previously used to mitigate  $l_2$ attack, we have proposed a  modified basis pursuit with novel constraint to combat  $l_0$ and $l_\infty$ attacks, and also have provided their  performance bounds. The proposed CAD algorithm achieves excellent classification accuracy with low computational complexity and low  memory requirement for both white box gradient attacks and norm attacks. 

While our paper combines compressive sensing and multi-armed bandit techniques for adversarial image classification, this approach can be adopted even for classifying and detecting adversarial videos. However, computation complexity will be a major challenge for videos, and that can be alleviated to some extent by opportunistically sampling frames and applying tools similar to this paper on them. Tools from restless bandit theory can also be useful for videos. Thus, our paper opens the possibility of starting a new research domain on adversarial image and video detection using theoretical tools, which has traditionally seen mostly DNN and heuristic based efforts.
\bibliographystyle{unsrt}
\bibliography{bibliography.bib}

\begin{thebibliography}{10}

\bibitem{CAD}
Akash~Kumar Gupta.
\newblock {\em
  \url{https://github.com/AkashKumarGupta07/Compressive-Sensing-Based-Adaptive-DefenceAgainst-Adversarial-Images.git}}.

\bibitem{liang2021fast}
Shuang Liang, Huixiang Liu, Yu~Gu, Xiuhua Guo, Hongjun Li, Li~Li, Zhiyuan Wu,
  Mengyang Liu, and Lixin Tao.
\newblock Fast automated detection of covid-19 from medical images using
  convolutional neural networks.
\newblock {\em Communications Biology}, 4(1):1--13, 2021.

\bibitem{geiger2012we}
Andreas Geiger, Philip Lenz, and Raquel Urtasun.
\newblock Are we ready for autonomous driving? the kitti vision benchmark
  suite.
\newblock In {\em 2012 IEEE Conference on Computer Vision and Pattern
  Recognition}, pages 3354--3361. IEEE, 2012.

\bibitem{jose2019face}
Edwin Jose, M~Greeshma, Mithun~TP Haridas, and MH~Supriya.
\newblock Face recognition based surveillance system using facenet and mtcnn on
  jetson tx2.
\newblock In {\em 2019 5th International Conference on Advanced Computing \&
  Communication Systems (ICACCS)}, pages 608--613. IEEE, 2019.

\bibitem{xue2021naturalae}
Mingfu Xue, Chengxiang Yuan, Can He, Jian Wang, and Weiqiang Liu.
\newblock Naturalae: Natural and robust physical adversarial examples for
  object detectors.
\newblock {\em Journal of Information Security and Applications}, 57:102694,
  2021.

\bibitem{kurakin2016adversarial}
Alexey Kurakin, Ian Goodfellow, Samy Bengio, et~al.
\newblock Adversarial examples in the physical world, 2016.

\bibitem{goodfellow2014explaining}
Ian~J Goodfellow, Jonathon Shlens, and Christian Szegedy.
\newblock Explaining and harnessing adversarial examples.
\newblock {\em arXiv preprint arXiv:1412.6572}, 2014.

\bibitem{candes2006stable}
Emmanuel~J Candes, Justin~K Romberg, and Terence Tao.
\newblock Stable signal recovery from incomplete and inaccurate measurements.
\newblock {\em Communications on Pure and Applied Mathematics: A Journal Issued
  by the Courant Institute of Mathematical Sciences}, 59(8):1207--1223, 2006.

\bibitem{foucart2013invitation}
Simon Foucart and Holger Rauhut.
\newblock An invitation to compressive sensing.
\newblock In {\em A mathematical introduction to compressive sensing}, pages
  1--39. Springer, 2013.

\bibitem{auer1995gambling}
Peter Auer, Nicolo Cesa-Bianchi, Yoav Freund, and Robert~E Schapire.
\newblock Gambling in a rigged casino: The adversarial multi-armed bandit
  problem.
\newblock In {\em Proceedings of IEEE 36th Annual Foundations of Computer
  Science}, pages 322--331. IEEE, 1995.

\bibitem{needell2009cosamp}
Deanna Needell and Joel~A Tropp.
\newblock Cosamp: Iterative signal recovery from incomplete and inaccurate
  samples.
\newblock {\em Applied and computational harmonic analysis}, 26(3):301--321,
  2009.

\bibitem{madry2017towards}
Aleksander Madry, Aleksandar Makelov, Ludwig Schmidt, Dimitris Tsipras, and
  Adrian Vladu.
\newblock Towards deep learning models resistant to adversarial attacks.
\newblock {\em arXiv preprint arXiv:1706.06083}, 2017.

\bibitem{carlini2017towards}
Nicholas Carlini and David Wagner.
\newblock Towards evaluating the robustness of neural networks.
\newblock In {\em 2017 ieee symposium on security and privacy (sp)}, pages
  39--57. IEEE, 2017.

\bibitem{papernot2016limitations}
Nicolas Papernot, Patrick McDaniel, Somesh Jha, Matt Fredrikson, Z~Berkay
  Celik, and Ananthram Swami.
\newblock The limitations of deep learning in adversarial settings.
\newblock In {\em 2016 IEEE European symposium on security and privacy
  (EuroS\&P)}, pages 372--387. IEEE, 2016.

\bibitem{liu2016delving}
Yanpei Liu, Xinyun Chen, Chang Liu, and Dawn Song.
\newblock Delving into transferable adversarial examples and black-box attacks.
\newblock {\em arXiv preprint arXiv:1611.02770}, 2016.

\bibitem{chen2017zoo}
Pin-Yu Chen, Huan Zhang, Yash Sharma, Jinfeng Yi, and Cho-Jui Hsieh.
\newblock Zoo: Zeroth order optimization based black-box attacks to deep neural
  networks without training substitute models.
\newblock In {\em Proceedings of the 10th ACM workshop on artificial
  intelligence and security}, pages 15--26, 2017.

\bibitem{brendel2017decision}
Wieland Brendel, Jonas Rauber, and Matthias Bethge.
\newblock Decision-based adversarial attacks: Reliable attacks against
  black-box machine learning models.
\newblock {\em arXiv preprint arXiv:1712.04248}, 2017.

\bibitem{bafna2018thwarting}
Mitali Bafna, Jack Murtagh, and Nikhil Vyas.
\newblock Thwarting adversarial examples: An $ l\_0 $-robustsparse fourier
  transform.
\newblock {\em arXiv preprint arXiv:1812.05013}, 2018.

\bibitem{dhaliwal2020compressive}
Jasjeet Dhaliwal and Kyle Hambrook.
\newblock Compressive recovery defense: Defending neural networks against ℓ
  2, ℓ∞, and ℓ 0 norm attacks.
\newblock In {\em 2020 International Joint Conference on Neural Networks
  (IJCNN)}, pages 1--8. IEEE, 2020.

\bibitem{yu2018interpreting}
Fuxun Yu, Chenchen Liu, Yanzhi Wang, Liang Zhao, and Xiang Chen.
\newblock Interpreting adversarial robustness: A view from decision surface in
  input space.
\newblock {\em arXiv preprint arXiv:1810.00144}, 2018.

\bibitem{ji2019multi}
Jiahuan Ji, Baojiang Zhong, and Kai-Kuang Ma.
\newblock Multi-scale defense of adversarial images.
\newblock In {\em 2019 IEEE International Conference on Image Processing
  (ICIP)}, pages 4070--4074. IEEE, 2019.

\bibitem{aprilpyone2020encryption}
MaungMaung AprilPyone and Hitoshi Kiya.
\newblock Encryption inspired adversarial defense for visual classification.
\newblock {\em arXiv preprint arXiv:2005.07998}, 2020.

\bibitem{yadav2020efficient}
Darpan~Kumar Yadav, Kartik Mundra, Rahul Modpur, Arpan Chattopadhyay, and
  Indra~Narayan Kar.
\newblock Efficient detection of adversarial images.
\newblock {\em arXiv preprint arXiv:2007.04564}, 2020.

\bibitem{feinman2017detecting}
Reuben Feinman, Ryan~R Curtin, Saurabh Shintre, and Andrew~B Gardner.
\newblock Detecting adversarial samples from artifacts.
\newblock {\em arXiv preprint arXiv:1703.00410}, 2017.

\bibitem{szegedy2013intriguing}
Christian Szegedy, Wojciech Zaremba, Ilya Sutskever, Joan Bruna, Dumitru Erhan,
  Ian Goodfellow, and Rob Fergus.
\newblock Intriguing properties of neural networks.
\newblock {\em arXiv preprint arXiv:1312.6199}, 2013.

\bibitem{sharma2019effectiveness}
Yash Sharma, Gavin~Weiguang Ding, and Marcus Brubaker.
\newblock On the effectiveness of low frequency perturbations.
\newblock {\em arXiv preprint arXiv:1903.00073}, 2019.

\bibitem{guo2018low}
Chuan Guo, Jared~S Frank, and Kilian~Q Weinberger.
\newblock Low frequency adversarial perturbation.
\newblock {\em arXiv preprint arXiv:1809.08758}, 2018.

\bibitem{wang2020defending}
Derui Wang, Chaoran Li, Sheng Wen, Surya Nepal, and Yang Xiang.
\newblock Defending against adversarial attack towards deep neural networks via
  collaborative multi-task training.
\newblock {\em IEEE Transactions on Dependable and Secure Computing}, 2020.

\bibitem{poor2013introduction}
H~Vincent Poor.
\newblock {\em An introduction to signal detection and estimation}.
\newblock Springer Science \& Business Media, 2013.

\bibitem{chafaa2020exploiting}
Irched Chafaa, E~Veronica Belmega, and M{\'e}rouane Debbah.
\newblock Exploiting channel sparsity for beam alignment in mmwave systems via
  exponential learning.
\newblock In {\em 2020 IEEE International Conference on Communications
  Workshops (ICC Workshops)}, pages 1--6. IEEE, 2020.

\bibitem{donoho2006stability}
David~L Donoho and Michael Elad.
\newblock On the stability of the basis pursuit in the presence of noise.
\newblock {\em Signal Processing}, 86(3):511--532, 2006.

\bibitem{cheraghchi2013restricted}
Mahdi Cheraghchi, Venkatesan Guruswami, and Ameya Velingker.
\newblock Restricted isometry of fourier matrices and list decodability of
  random linear codes.
\newblock {\em SIAM Journal on Computing}, 42(5):1888--1914, 2013.

\bibitem{fefferman1983uncertainty}
Charles~L Fefferman.
\newblock The uncertainty principle.
\newblock {\em Bulletin (New Series) of the American Mathematical Society},
  9(2):129--206, 1983.

\bibitem{de2000mahalanobis}
Roy De~Maesschalck, Delphine Jouan-Rimbaud, and D{\'e}sir{\'e}~L Massart.
\newblock The mahalanobis distance.
\newblock {\em Chemometrics and intelligent laboratory systems}, 50(1):1--18,
  2000.

\bibitem{lecun1998mnist}
Yann LeCun.
\newblock The mnist database of handwritten digits.
\newblock {\em http://yann. lecun. com/exdb/mnist/}, 1998.

\bibitem{krizhevsky2009learning}
Alex Krizhevsky, Geoffrey Hinton, et~al.
\newblock Learning multiple layers of features from tiny images.
\newblock 2009.

\bibitem{rauber2017foolbox}
Jonas Rauber, Wieland Brendel, and Matthias Bethge.
\newblock Foolbox: A python toolbox to benchmark the robustness of machine
  learning models.
\newblock {\em arXiv preprint arXiv:1707.04131}, 2017.

\bibitem{targ2016resnet}
Sasha Targ, Diogo Almeida, and Kevin Lyman.
\newblock Resnet in resnet: Generalizing residual architectures.
\newblock {\em arXiv preprint arXiv:1603.08029}, 2016.

\bibitem{tramer2016stealing}
Florian Tram{\`e}r, Fan Zhang, Ari Juels, Michael~K Reiter, and Thomas
  Ristenpart.
\newblock Stealing machine learning models via prediction apis.
\newblock In {\em 25th $\{$USENIX$\}$ Security Symposium ($\{$USENIX$\}$
  Security 16)}, pages 601--618, 2016.

\bibitem{athalye2018obfuscated}
Anish Athalye, Nicholas Carlini, and David Wagner.
\newblock Obfuscated gradients give a false sense of security: Circumventing
  defenses to adversarial examples.
\newblock In {\em International conference on machine learning}, pages
  274--283. PMLR, 2018.

\bibitem{carlini2019evaluating}
Nicholas Carlini, Anish Athalye, Nicolas Papernot, Wieland Brendel, Jonas
  Rauber, Dimitris Tsipras, Ian Goodfellow, Aleksander Madry, and Alexey
  Kurakin.
\newblock On evaluating adversarial robustness.
\newblock {\em arXiv preprint arXiv:1902.06705}, 2019.

\end{thebibliography}

\end{document}